\def\unnumbered#1{%
\begin{trivlist}
\item[] {{#1}\kern0.5em}}
\def\endunnumbered{\end{trivlist}}
\def\myappenv#1#2{\begin{unnumbered}{{\bf #1~\ref{#2}.}}}
\def\endmyappenv{\end{unnumbered}}
\def\newblock{\hskip .11em plus.33em minus.07em}
\newcommand{\name}[1]{\textit{#1}}
\newcommand{\set}[1]{\{#1\}}
\newcommand{\inset}[2]{\{#1\;|\;#2\}}
\newcommand{\Can}{{\it CAND}}
\newcommand{\adj}{{\rm adj}}
\newcommand{\MyProc}[1]{{\rm \textsc{#1}}}
\begin{document}

\title{Efficient Enumeration of Induced Subtrees in a K-Degenerate Graph}

\author{Kunihiro Wasa\inst{1} \and Hiroki Arimura\inst{1} \and Takeaki Uno\inst{2}}
\institute{Hokkaido University, Graduate School of Information Science and Technology, Japan, 
\email{\{wasa, arim\}@ist.hokudai.ac.jp}
\and National Institute of Informatics, Japan, 
\email{uno@nii.jp}}
\maketitle

\begin{abstract}
    In this paper, 
    we address the problem of  
    enumerating all induced subtrees 
    in an input \name{$k$-degenerate} graph, 
    where an \name{induced subtree} is 
    an acyclic and connected induced subgraph. 
    A graph $G = (V, E)$ is a $k$-degenerate graph 
    if for any its induced subgraph has a vertex 
    whose degree is less than or equal to $k$, 
    and many real-world graphs have small degeneracies, 
    or very close to small degeneracies. 
    Although, the studies are on subgraphs enumeration, 
    such as trees, paths, and matchings, 
    but the problem addresses the subgraph
    enumeration, such as enumeration of subgraphs that are trees.
    Their induced subgraph versions have not been studied well.
    One of few example is for chordless paths and cycles.
    Our motivation is to reduce 
    the time complexity close to $O(1)$ for each solution. 
    This type of optimal algorithms are proposed 
    many subgraph classes such as trees, and spanning trees. 
    Induced subtrees are fundamental object 
    thus it should be studied deeply 
    and there possibly exist some efficient algorithms. 
    Our algorithm utilizes nice properties of $k$-degeneracy 
    to state an effective amortized analysis. 
    As a result, the time complexity is reduced to 
    $O(k)$ time per induced subtree. 
    The problem is solved in constant time for each in
    planar graphs, as a corollary. 
\end{abstract}


\section{Introduction}

\name{Subgraph enumeration problems} are enumeration problems 
that given a graph $G$ and a graph class $\mathcal{S}$, 
output all subgraphs $S$ of $G$ satisfying $S \in \mathcal{S}$ without duplicates. 
Subgraph enumeration problems are widely studied~\cite{Avis:Fukuda:DAM:1996,Ferreira:Grossi:Rizzi:ESA:2011,Shioura:Tamura:Uno:SIAM:1997,Tarjan:Read:1975,Wasa:COCOON:2012,Birmele:etal:SODA:2013,Uno:SIGAL:2003,Tarjan:1973}. 
Enumeration involves a huge number of solutions, 
thus enumeration algorithms are supposed to run in short time, 
with respect to the number of solutions $N$. 
For example, if an algorithm runs in $O(Nf)$ time for small $f$, 
other than preprocessing, 
we can consider the algorithm is efficient. 
In this case, 
we say that the algorithm runs in $O(f)$ time per solution, 
or $O(f)$ time for each solution. 
Further, 
the maximum computation time between two consecutive outputs called \name{delay} is 
also considered as a more efficiency of enumeration algorithms. 
Note that delay will not be $O(f)$ 
even if an algorithm runs in $O(f)$ time per solution. 

Enumeration algorithms are widely studied in these days. 
Especially, 
the data mining area has a large amount of studies on pattern mining problem. 
The algorithms have to deal with huge databases and a huge number of solutions, 
thus there are great needs of the algorithm theory on efficient enumeration. 
As we show below, 
many recent studies focus on the development of small complexity algorithms. 
Compared to other algorithms, 
enumeration algorithms have some unique aspects. 
For example, 
by operating only on the differences between the solutions, 
one can develop algorithms that 
run in time shorter than the amount of exact output. 
Other than this, 
since the recursion is much more structured compared to optimization, 
we can develop a non-trivial amortized analysis. 
As a consequent, 
researches on the numeration algorithms have great interests. 

In what follows, 
we fix the input graph $G = (V, E)$, 
and let $m = |E|$, $n = |V|$. 
In the 1970s, 
Tarjan and Read~\cite{Tarjan:Read:1975} studied 
a problem of enumerating spanning trees in the input graph. 
Their algorithm runs in $O(m+n+mN)$ time. 
Shioura, Tamura, and Uno~\cite{Shioura:Tamura:Uno:SIAM:1997} 
is improved the complexity to $O(n+m+N)$ time. 
Tarjan~\cite{Tarjan:1973} proposed an algorithm for enumerating all cycle 
in $O((|V|+|E|)(|\mathcal{C}(G)|+1))$ time, 
where $\mathcal{C}(G)$ is all cycle in $G$. 
Birmel\'{e} \textit{et al.}~\cite{Birmele:etal:SODA:2013} 
improved the complexity to 
in $O(m + \sum_{c\in\mathcal{C}(G)}|c|)$ total time. 
They also presented an enumeration algorithm for
all st-paths in the input graph $G$ 
in $O(m + \sum_{\pi\in\mathcal{P}_{st}(G)}|\pi|)$ total time, 
where $\mathcal{P}_{st}(G)$ is all st-paths in $G$. 
Ferreira \textit{et al.}~\cite{Ferreira:Grossi:Rizzi:ESA:2011} proposed 
an enumeration algorithm 
that enumerating all subtree having exactly $k$ edges in $G$ in $O(kN)$ time. 
Wasa \textit{et al.}~\cite{Wasa:COCOON:2012} presented 
an improved version of Ferreira \textit{et al.}'s problem in constant time delay 
when the input is a tree. 
As we see, 
speed up of enumeration algorithms have been intensively studied in long history.

Compared to these studies, 
induced subgraph enumerations have not been studied well.
Avis and Fukuda~\cite{Avis:Fukuda:DAM:1996} considered 
the connected induced subgraph enumeration problem. 
Their algorithm is based on reverse search, 
and runs in $O(mnN)$ time.
Uno~\cite{Uno:SIGAL:2003} proposed 
an enumeration algorithm for enumerating all chordless path 
connecting the given vertices $s$ and $t$ 
and all chordless cycle in $O((m+n)N)$ time. 

In this paper, 
we address the problem of enumerating all induced subtrees
in the given graph, 
where an induced subtree is a connected induced subgraph that has no cycle. 
Assume that the set of vertices in an induced subtree is $S$. 
Then, $V \setminus S$ is a feedback vertex set of $G$. 
Feedback vertices are also 
fundamental graph objects and their enumeration problem is 
equivalent to that of induced subtrees. 
If the input graph $G$ is a tree, 
the connected induced subgraph of $G$ is a subtree. 
Thus, 
Wasa \textit{et al.}'s shows that 
the induced subtree enumeration problem can be solved in constant time delay 
when the input graph is a tree. 
Tree is a simple graph class, 
so we are motivated whether we can do better
in more general graph classes with non-trivial algorithms.

As a main result of this paper, 
we propose an algorithm for the $k$-degenerate graph case. 
The algorithm runs in $O(k)$ time per solution, 
after $(|V|+|E|)$ preprocessing time. 
The algorithm starts from the empty subgraph,
and adds a vertex recursively to enlarge the induced subtree.
The vertex to be added has to be adjacent to the current induced subtree, 
and has not to make a cycle. 
By using the degeneracy, 
we efficiently maintain the addible vertices, 
and the time complexity is bounded by a sophisticated amortized analysis.
Real world graphs usually have small degeneracies, 
or only few vertex removals result small degeneracies, 
the algorithm is expected to be efficient in practice. 
Compared to other graph classes, 
this is a strong point of $k$-degenerate graphs. 
There have been not so many studies on the use of the degeneracy for enumeration algorithm, 
and thus our approach introduces 
one of new way of developing practically efficient 
and theoretically supported algorithms.

The rest of this paper is organized as follows: 
In Section~\ref{sec:prelim}, 
we gives definitions in this paper 
and the definition of our problem. 
In Section~\ref{sec:algorithm}, 
we propose a basic enumeration algorithm based on a binary partition method. 
In Section~\ref{sec:time:complexity}, 
we improve the algorithm by using a property of the degeneracy, 
and analyze its time complexity. 
Finally, we conclude this paper and give future works in Section~\ref{sec:conc}.

\section{Preliminaries}
\label{sec:prelim}

\subsection{Graphs}

Let $G=(V, E)$ be an \name{undirected graph}, 
where $V$ is the set of \name{vertices} and $E \subseteq V^2$ is the set of \name{edges}. 
In this paper, we assume that $G$ is simple and finite. 
We denote by $(u, v)$ the edge connecting $u$ and $v$. 
For any vertices $u, v$ of $V$, 
we say that $u$ and $v$ are \name{adjacent} to each other if $(u, v) \in E$.  
We denote by $N_G(u)$ the set of all vertices adjacent to $u$ in $G$. 
We define the \name{degree} $d_G(u)$ of $u$ in $V$ as the number of vertices adjacent to $u$. 
In what follows,
if it is clear from context, 
we omit the subscript $G$. 

A \name{path} in $G$ is 
a sequence of distinct vertices $\pi(u, v) = (v_1 = u, \dots, v_j = v)$, 
such that $v_i$ and $v_{i+1}$ are adjacent to each other for $1 \le i < j$. 
If there is $\pi(u, v)$ in $G$, 
we say that the path \name{connects} $u$ and $v$. 
The \name{length} of path $\pi(u, v)$ is the number of vertices in $\pi(u, v)$ minus one. 
For any path $\pi(u, v)$ of length larger than one, 
$\pi(u, v)$ is called a \name{cycle} if $u = v$. 
We say that $G$ is \name{connected} 
if there is a path connecting any pair of vertices in $G$. 
$G$ is a \name{tree} if $G$ has no cycle and is connected.

\subsection{Induced subtrees}

Let $S$ be a subset of $V$. 
We denote by $G[S] = (S, E[S])$ the graph \name{induced} by $S$, 
where $E[S] = \inset{(u, v) \in E}{u, v \in S}$. 
We call $G[S]$ an \name{induced subgraph} of $G$. 
If no confusion, we regard $S$ as $G[S]$.
$|S|$ is the size of $S$. 
We say that $S$ is an \name{induced subtree} (see Fig.~\ref{img:induced:subtree}),  
if $S$ is a tree.  
In the following, we state the problem of this paper. 

\spnewtheorem*{wproblem}{Problem}{\itshape}{\itshape}
\begin{wproblem}[Induced subtree enumeration problem]
    \label{prob:main}
    Enumerate all induced subtrees in  $G = (V, E)$. 
\end{wproblem}

\begin{figure}[t]
    \begin{center}
        \includegraphics[width=10em]{./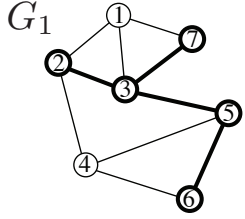}
        \caption{
            An induced subtree $S_1$ in $G_1$. 
            In the figure, bolded vertices and edges represent vertices and edges in $S_1$. 
            $S_1$ consists of $\set{2, 3, 5, 6, 7}$. 
            $S_1$ is an induced subtree in $G_1$ since $S_1$ is connected and acyclic. 
        }
        \label{img:induced:subtree}
    \end{center}
\end{figure}


\subsection{$K$-degenerate graphs}

A graph $G$ is \name{$k$-degenerate}~\cite{Lick:White:CJM:1970}
if any its induced subgraph of $G$ has 
a vertex whose degree is less than or equal to $k$. 
The \name{degeneracy} of $G$ is defined 
as the smallest $k$ satisfying the definition of $k$-degenerate graphs. 
Examples of graph classes with constant degeneracy include 
trees, grid graphs, outerplanar graphs, and planer graphs, 
thus degenerate graph is a large class of sparse graphs. 
These degeneracy are 1, 2, 2, and 5, respectively. 

From the definition of $k$-degeneracy, 
we obtain a vertex sequence $(u_1, \dots, u_{|V|})$ 
satisfying the condition    
\begin{equation*}
\forall 1\le i\le |V|,\; |\inset{u_j \in N(u_i)}{i < j \le |V|}| \le k \cdots (\star).  
\end{equation*}
This condition $(\star)$ implies that 
there exists an \name{ordering} among vertices of $G$ 
such that for any vertex $u$, 
the number of vertices adjacent to $u$ larger than it is at most $k$. 
Hereafter we assume that the vertices 
are indexed in this ordering. 
We say $u<v$ ($u>v$, respectively) 
if the index of $u$ is smaller than $v$ ($u$ is larger than $v$, respectively) with respect to this ordering. 
In Fig.~\ref{img:k:degeneracy}, 
we show an example of the ordering
satisfying $(\star)$. 
Matula and Beck~\cite{Matula:Beck:1983} proposed 
an algorithm for obtaining the degeneracy of $G$ and the ordering satisfying $(\star)$. 
By iteratively choosing the smallest degree vertex and removing it from $G$,  
their algorithm finds such an ordering in $O(|V| + |E|)$ time.

\section{Basic Binary Partition Algorithm}
\label{sec:algorithm}

\subsection{Candidate Sets and Forbidden Sets}
\label{subsec:can}

Let $S$ be an induced subtree of $G$. 
We define the \name{adjacency} of a vertex $u \in V$ to $S$ 
as $\adj(S, u) = |S\cap N(u)|$, 
that is, $\adj(S, u)$ is the number of vertices of $S$ adjacent to $u$. 

\begin{lemma}
    \label{lem:neighbor}
    Let $S$ be any induced subtree in $G$ and $u$ be any vertex $V \setminus S$. 
    $S \cup \set{u}$ is an induced subtree if and only if $\adj(S, u) = 1$. 
\end{lemma}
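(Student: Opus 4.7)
The plan is to prove the two directions of the biconditional separately, using only the two defining properties of an induced subtree (connectedness and acyclicity) together with the tree structure of $S$.

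For the ``only if'' direction, I would assume that $S \cup \set{u}$ is an induced subtree and show $\adj(S,u) = 1$. The lower bound $\adj(S,u) \ge 1$ is immediate: since $S \cup \set{u}$ is connected and $u \notin S$, at least one edge of $G[S \cup \set{u}]$ must be incident to $u$, and any such edge has its other endpoint in $S$. For the upper bound I would argue by contradiction: suppose $u$ had two distinct neighbors $v_1, v_2 \in S$. Because $S$ is a tree, there is a (unique) path $\pi(v_1, v_2)$ lying inside $S$; concatenating the edges $(u, v_1)$ and $(u, v_2)$ with $\pi(v_1, v_2)$ yields a cycle in $G[S \cup \set{u}]$, contradicting the assumption that $S \cup \set{u}$ is acyclic.

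For the ``if'' direction, assume $\adj(S, u) = 1$ and let $v$ be the unique neighbor of $u$ in $S$. Connectedness of $S \cup \set{u}$ follows at once from the connectedness of $S$ together with the edge $(u, v)$. For acyclicity, I would consider any putative cycle $C$ in $G[S \cup \set{u}]$: either $C$ avoids $u$, in which case $C$ lives in $S$ and contradicts that $S$ is a tree, or $C$ visits $u$, which would require $u$ to have at least two distinct neighbors in $S \cup \set{u}$; since $u \notin S$ those two neighbors must lie in $S$, contradicting $\adj(S, u) = 1$. Hence no cycle exists, and $S \cup \set{u}$ is an induced subtree.

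The argument is entirely structural and does not touch the $k$-degeneracy hypothesis, so there is no real obstacle. The only point requiring mild care is the acyclicity step in the ``if'' direction: one must note that the induced-subgraph convention means every edge between $u$ and a vertex of $S$ is actually present in $G[S \cup \set{u}]$, so the count of such edges is exactly $\adj(S,u)$, and a cycle through $u$ uses precisely two of them. Once this is spelled out, both directions close in a few lines.
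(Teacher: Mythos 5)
Your proposal is correct and follows essentially the same reasoning as the paper: the paper argues by a trichotomy on $\adj(S,u)$ (more than one neighbor creates a cycle via the tree path in $S$, zero neighbors disconnects, exactly one neighbor keeps $S\cup\set{u}$ connected and acyclic since $u$ has degree one), which is exactly the content of your two directions. The only difference is organizational, not mathematical.
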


\begin{proof}
    If $\adj(S, u) > 1$, 
    $u$ is adjacent to two vertices $v$ and $w$ of $S$. 
    Since $S$ has a path $\pi$ connecting $v$ and $w$, 
    the addition of $u$ yields a cycle in $S\cup\set{u}$. 
    If $\adj(S, u) = 0$, $S\cup\set{u}$ is disconnected. 
    If $\adj(S, u) = 1$, $S\cup\set{u}$ is connected. 
    Since the degree of $u$ in $G[S\cup\set{u}]$ is one, 
    $u$ is not included in a cycle. 
    Thus, $G[S\cup\set{u}]$ does not contain a cycle. 
    \qed
\end{proof}

In each iteration, 
we maintain the \name{forbidden set} $X$ 
as the vertex set such that any vertex $u$ in $X$ 
satisfies either $u$ belongs to $S$, 
$S\cup\set{u}$ includes a cycle, 
or $u$ is forbidden to include in the solution by some ancestor iterations of the iteration.  
We also maintain the \name{candidate set} $\Can$ 
as the set of vertices whose additions yield induced subtrees and are not included in $X$. 
We maintain $\Can$ and $X$ for efficient computation. 
From Lemma~\ref{lem:neighbor}, 
they are disjoint, 
and for any vertex $u$, 
if $\adj(S, u) > 0$, 
$u$ belongs to either $\Can$ or $X$. 

\begin{figure}[!t]
    \begin{center}
        \includegraphics[width=28em]{./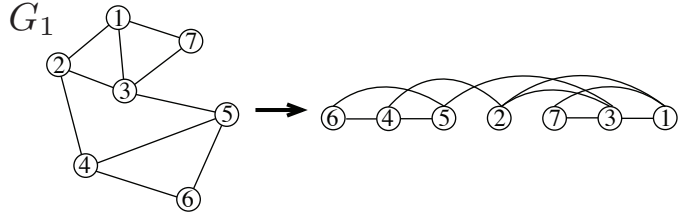}
        \caption{
            An example of an ordering of $G_1 = (V_1, E_1)$. 
            In the right graph, 
            vertices are sorted by the ordering that satisfies $(\star)$. 
        \label{img:k:degeneracy}}
    \end{center}
\end{figure}

\subsection{Basic Binary Partition}
\label{subsec:algorithm}

Our algorithm starts from the empty induced subtree $S = \emptyset$. 
In each iteration given an induced subtree $S$, 
we remove a vertex $u$ from $\Can$, 
and partition the problem into two; 
enumeration of all induced subtrees including $S\cup\set{u}$, 
and those including $S$ but not including $u$. 
We recursively do this partition 
until there is no vertex in $\Can$. 
The former can be solved 
by a recursive call with setting $S$ to $S\cup\set{u}$. 
The latter is solved 
by a recursive call with setting $X$ to $X\cup\set{u}$. 
In this way, 
we can enumerate all induced subtrees. 
We present 
the main routine \MyProc{ISE} of our algorithm 
in Algorithm~\ref{alg:ISE}.  
We show how to update candidate sets and forbidden sets in the next two lemmas. 

\begin{lemma}
    \label{lem:can:update}
    For an induced subtree $S$ and a vertex $u \in \Can$, 
    when we add $u$ to $S$ and remove $u$ from $\Can$, 
    $\Can$ changes to  
    \begin{equation*}
        (\Can\setminus N(u)) \cup (N(u) \setminus (\Can \cup X))).  
    \end{equation*}
\end{lemma}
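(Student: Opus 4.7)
The plan is to verify the identity by a case analysis on each vertex $v \in V\setminus (S\cup\{u\})$, combining two facts: (i) by Lemma~\ref{lem:neighbor}, $v$ belongs to the new $\Can$ (for the enlarged subtree $S'=S\cup\{u\}$) iff $v\notin X$ and $\adj(S',v)=1$; and (ii) $\adj(S',v)=\adj(S,v)+1$ when $v\in N(u)$, and $\adj(S',v)=\adj(S,v)$ otherwise. Together with the invariant stated just before the lemma---namely, any vertex $v$ with $\adj(S,v)>0$ lies in $\Can\cup X$---these facts pin down membership of each $v$ in the updated $\Can$ completely.

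First I would dispose of the case $v\notin N(u)$: the adjacency to the subtree is unchanged, so $v$ lies in the new $\Can$ iff it was in the old $\Can$, which is precisely the contribution of $\Can\setminus N(u)$ in the formula. Then I would handle the three subcases of $v\in N(u)$: if $v\in \Can$ before the update, then $\adj(S,v)=1$ forces $\adj(S',v)=2$, so $v$ must leave $\Can$, matching the removal $\Can\setminus N(u)$; if $v\in X$, then $v$ stays forbidden and must not appear in $\Can$, which is guaranteed because the added set $N(u)\setminus(\Can\cup X)$ excludes $X$; if $v\notin \Can\cup X$, the above invariant gives $\adj(S,v)=0$, hence $\adj(S',v)=1$, so $v$ becomes a new candidate---exactly what $N(u)\setminus(\Can\cup X)$ contributes.

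A small bookkeeping point I would make explicit is that $u$ itself disappears from $\Can$ because the lemma statement builds the removal of $u$ into the operation; since $u\notin N(u)$ in a simple graph, $u$ is not re-introduced by either term of the formula. The two displayed sets are also easily seen to be disjoint, so no vertex is double-counted.

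The main obstacle, and the only non-mechanical part, is step~(i) above: justifying that in the ``$v\in N(u)$, $v\notin \Can\cup X$'' subcase one really has $\adj(S,v)=0$. I would make this rest explicitly on the structural invariant for $\Can$ and $X$ maintained across recursive calls, namely that every vertex with at least one neighbor in the current $S$ is classified as either a candidate or forbidden; everything else then follows by routine case splitting.
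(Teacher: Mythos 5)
Your proof is correct and follows essentially the same route as the paper's: a case split on whether $v\in N(u)$ and, within $N(u)$, on membership in $\Can$, $X$, or neither, using the invariant that $\adj(S,v)>0$ implies $v\in\Can\cup X$ to conclude $\adj(S,v)=0$ in the last subcase. You are in fact slightly more careful than the paper about completeness and about $u$ not being re-introduced, but the underlying argument is the same.
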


\begin{proof}
    Any vertex in $\Can$ other than $N(u)$ 
    remains in $\Can$ after the addition of $u$ to $S$ 
    since the adjacencies of the vertices do not change. 
    If vertices in $N(u) \cap (\Can \cup X)$ are added to $S\cup\set{u}$, 
    they are in $S$, or they make cycles 
    since they are adjacent to $u$ and other vertices in $S$.  
    The adjacency of any vertex in $N(u) \setminus (\Can \cup X)$ is zero for $S$, 
    and one for $S\cup\set{u}$. 
    Any vertex $v \notin S$ satisfying $\adj(S\cup\set{u}, v) = 1$ 
    is either in $N(u)$ or $\Can$. 
    Thus, the statement holds. 
    \qed
\end{proof}

\begin{lemma}
    \label{lem:X:update}
    For an induced subtree $S$ and a vertex $u \in \Can$, 
    when we add $u$ to $S$ and remove $u$ from $\Can$, 
    $X$ changes to 
    \begin{equation*}
        X \cup \set{u} \cup  (\Can \cap N(u)).  
    \end{equation*}
\end{lemma}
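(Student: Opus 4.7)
The plan is to establish the stated equality by verifying three directions: every vertex of the old $X$ remains in the new $X$, the two newly added chunks ($u$ itself and $\Can \cap N(u)$) must join $X$, and no other vertex needs to be placed in $X$. This matches the definition of the forbidden set (vertices belonging to $S$, vertices whose addition would create a cycle, and vertices forbidden by ancestor iterations).

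First I would handle the easy part: $u$ belongs to the new $S \cup \{u\}$, so by the very definition of the forbidden set $u$ joins $X$. For any $v \in \Can \cap N(u)$, we previously had $\adj(S, v) = 1$, and since $v$ is adjacent to $u$ the new adjacency satisfies $\adj(S \cup \{u\}, v) = 2$; by Lemma~\ref{lem:neighbor}, adding $v$ to $S \cup \{u\}$ would create a cycle, so $v$ must be placed in $X$. Any vertex already in $X$ continues to satisfy the defining condition after the update, since (i) vertices in $S$ stay in $S \cup \{u\}$, (ii) cycle-creating vertices remain cycle-creating because their adjacency is monotone in $S$, and (iii) vertices forbidden by ancestor iterations inherit that status unchanged.

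For the converse I would show no other vertex must enter $X$. A vertex $w \in \Can \setminus N(u)$ has $\adj(S \cup \{u\}, w) = \adj(S, w) = 1$, so by Lemma~\ref{lem:neighbor} it can still be added without creating a cycle and should stay in $\Can$. A vertex $w \notin S \cup X \cup \Can$ with $w \neq u$ has $\adj(S, w) = 0$; if $w \in N(u)$ then its new adjacency is $1$, so $w$ migrates into the new $\Can$ (consistently with Lemma~\ref{lem:can:update}), and if $w \notin N(u)$ its adjacency remains $0$, so it stays outside both sets. This exhausts all cases.

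The main obstacle I expect is a bookkeeping one rather than a conceptual one: making sure the argument stays consistent with the three-clause definition of $X$, in particular that the ``forbidden by ancestor iterations'' clause is preserved without any explicit action, and that the disjointness of $\Can$ and $X$ (noted right after Lemma~\ref{lem:neighbor}) is used implicitly when claiming $u \notin X$ and $\Can \cap N(u) \cap X = \emptyset$, so that the three pieces of the union describe genuinely new additions.
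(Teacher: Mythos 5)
Your proposal is correct and follows essentially the same route as the paper's proof: old members of $X$ persist by monotonicity of $\adj$, $u$ enters $X$ by definition, vertices of $\Can \cap N(u)$ enter because their adjacency rises to $2$ (hence a cycle by Lemma~\ref{lem:neighbor}), and no other vertex becomes forbidden. Your case analysis for the last step is merely a more explicit version of the paper's one-line claim that no other vertex is forbidden.
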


\begin{proof}
    Any vertex $v \in X$ remains in $X$ for $S\cup\set{u}$, 
    since $\adj(S\cup\set{u}, v) \ge \adj(S, v)$ always holds. 
    From the definition of the forbidden set, 
    $u$ is in $X$ for $S\cup\set{u}$. 
    Further, 
    any vertex $v$ in $\Can \cap N(u)$ makes cycles when they are added to $S\cup\set{u}$, 
    since $\adj(S\cup\set{u}, v) \ge 2$ holds. 
    By adding $u$ to $S$, 
    no other vertex is forbidden to be added, 
    thus the statement holds. 
    \qed
\end{proof}

\begin{theorem}
    \label{thm:correctness}
    Algorithm \MyProc{ISE} enumerates all induced subtrees 
    in the input graph $G = (V, E)$ without duplicates. 
\end{theorem}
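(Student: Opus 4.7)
The plan is to establish three claims: (a) every set $S$ ever output is an induced subtree of $G$; (b) every induced subtree of $G$ is output at least once; and (c) no induced subtree is output more than once. The argument is organized around a structural invariant on the triple $(S, \Can, X)$ maintained by \MyProc{ISE}.

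\textbf{Invariant.} At every recursive call, $S$ is an induced subtree; the sets $S$, $\Can$, and $X$ are pairwise disjoint; and $\Can = \inset{v \in V \setminus (S \cup X)}{\adj(S, v) = 1}$. The initial call with $S = \emptyset$, $\Can = V$, $X = \emptyset$ satisfies this vacuously. The inductive step uses Lemma~\ref{lem:neighbor} to justify that $S \cup \set{u}$ remains an induced subtree whenever $u \in \Can$, and Lemmas~\ref{lem:can:update} and~\ref{lem:X:update} to justify that the updated $\Can$ and $X$ continue to satisfy the invariant on both branches of the binary partition. Claim (a) is then immediate.

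\textbf{Uniqueness.} I would argue by induction on the recursion depth. At any branching step on a pivot $u$, the left branch enumerates only induced subtrees containing $u$, while the right branch enumerates only induced subtrees not containing $u$: indeed, $u$ is placed into $X$ at the start of the right branch, and inspecting Lemma~\ref{lem:X:update} no vertex ever leaves $X$. Hence the two recursive calls output disjoint families, and by induction the overall output contains no duplicates.

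\textbf{Completeness.} Given an arbitrary induced subtree $T$, I would trace a root-to-leaf path in the recursion tree whose leaf outputs $T$. Starting at the root, at each branching step on pivot $u$ descend left if $u \in T$ and right otherwise; along the entire trace we then have $S \subseteq T$. The main obstacle, and the crux of the argument, is to check that every $v \in T \setminus S$ is offered as a pivot (appears in $\Can$) before being forced into $X$. Inspecting Lemma~\ref{lem:X:update}, $X$ grows only by the current pivot $u$ (added only when $u \notin T$) and by vertices of $\Can \cap N(u)$ when $u$ is added to $S$; but any such $v \in \Can \cap N(u)$ has $\adj(S \cup \set{u}, v) \ge 2$, so $T$ would contain a cycle if $v \in T$, whence $v \notin T$. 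Consequently no vertex of $T$ is ever in $X$. Moreover, since $G[T]$ is a tree and $S$ is a connected subtree of it, any $v \in T \setminus S$ adjacent to $S$ satisfies $\adj(S, v) = 1$, so by the invariant $v \in \Can$; this keeps $\Can$ nonempty whenever $S \subsetneq T$. The trace therefore eventually reaches a call with $S = T$, at which point $T$ is output.
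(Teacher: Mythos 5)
Your argument is essentially correct, and it is worth noting that the paper itself gives no proof of this theorem at all: it is stated bare, and the proof of Theorem~\ref{thm:complexity} dismisses correctness with ``since the update of $\Can$ and $X$ is correct, the correctness of the algorithm is obvious,'' leaning implicitly on Lemmas~\ref{lem:can:update} and~\ref{lem:X:update}. Your three-part decomposition (soundness via the invariant, disjointness of the two branches' output families, and completeness by tracing a target subtree $T$ down the recursion while showing no vertex of $T$ ever enters $X$) supplies exactly the missing content; the key observation that a vertex $v \in \Can \cap N(u)$ forced into $X$ satisfies $\adj(S\cup\set{u},v)\ge 2$ and hence cannot lie in any acyclic $T \supseteq S\cup\set{u}$ is the right one, and the step showing $\Can \cap T \neq \emptyset$ whenever $S \subsetneq T$ (so the trace cannot die early) is precisely where a sloppier argument would fail. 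Two small wording issues to repair, neither fatal: first, your invariant $\Can = \inset{v \in V\setminus(S\cup X)}{\adj(S,v)=1}$ is false at the root, where $S=\emptyset$ and $\adj(\emptyset,v)=0$ yet $\Can=V$ --- it is not ``vacuously'' satisfied; state the invariant instead as $\Can = \inset{v \notin S\cup X}{S\cup\set{v}\text{ is an induced subtree}}$, which Lemma~\ref{lem:neighbor} identifies with the $\adj$ condition only for nonempty $S$. Second, pairwise disjointness of $S$, $\Can$, $X$ contradicts the paper's own convention (Lemma~\ref{lem:X:update} puts $u$ into both $S$ and $X$, and the definition of the forbidden set explicitly includes vertices of $S$); what you actually need, and what does hold, is $\Can \cap (S \cup X) = \emptyset$ together with the fact that no vertex ever leaves $X$.
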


\begin{algorithm}[t]
    \caption{Main routine \MyProc{ISE}: Enumerating all induced subtrees in $G$}
    \label{alg:ISE}
    \begin{algorithmic}[1]
        \Procedure{ISE}{$G = (V, E), S, \Can, X$}
        \State \textbf{if} $\Can = \emptyset$ 
               \textbf{then} output $S$; 
               \textbf{return};  
        \label{alg:ISE:output}
        \State choose the smallest vertex $u$ from $\Can$ and remove $u$ from $\Can$; 
        \label{alg:ISE:choose}
        \State call $\MyProc{ISE}(G, S, \Can, X\cup\set{u})$; 
        \label{alg:ISE:call1}
        \State call $\MyProc{ISE}(G, S\cup\set{u}, (\Can \setminus N(u)) \cup (N(u) \setminus \Can), X\cup\set{u}\cup(\Can \cap N(u)))$; 
        \label{alg:ISE:call2}
        \EndProcedure
    \end{algorithmic}
\end{algorithm}

\section{Improved Binary Partition Algorithm}
\label{sec:time:complexity} 

From Lemma~\ref{lem:can:update} and Lemma~\ref{lem:X:update}, 
we can easily see that 
the computation time of updating 
the candidate set and the forbidden set 
is $O(d_G(u))$ by checking all vertices adjacent to $u$. 
However, in this way, 
we must check some vertices again and again. 
Specifically, 
let us assume $u, v$ are consecutively added to $S$, 
and $w \notin S$ is adjacent to $u$, $v$ and another vertex in $S$. 
When we add $u$ to $S$, 
we check whether we can add $w$ to the candidate set of $S\cup\set{u}$. 
After generating $S\cup\set{u}$, we check $w$ again 
when we add $v$ to $S\cup\set{u}$. 
In order to avoid this redundant checking, 
we improve the way of updating the candidate set and the forbidden set 
by using the following set.

\begin{definition}
    Suppose that $u$ is a vertex of $\Can$ for an induced subtree of $G$. 
    We define a set $\Gamma(u, X)$ 
    as follows: 
    \begin{equation*}
        \Gamma(u, X) = \inset{v\in N(u)}{v \notin X, v < u}. 
    \end{equation*}
\end{definition}

\begin{lemma}
    \label{lem:mod:update:Can}
    Let $S$ be an induced subtree of $G$, 
    $u$ be the smallest in the candidate set $\Can$ of $S$, 
    and $X$ be the forbidden set of $S$. 
    Then, the following formula holds: 
    \begin{equation*}
         N(u) \setminus (\Can \cup X) 
     = (N'(u) \setminus (\Can \cup X)) \cup \Gamma(u, X), 
    \end{equation*}
    where $N'(u) = \inset{v \in N(u)}{u < v}$.  
\end{lemma}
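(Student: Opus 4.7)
The plan is to reduce the identity to a routine set-theoretic manipulation combined with one use of the hypothesis that $u$ is the smallest element of $\Can$. First I would partition $N(u)$ into the two disjoint pieces
\begin{equation*}
N(u) \;=\; N'(u)\;\cup\;\{v\in N(u)\mid v<u\},
\end{equation*}
using that $u\notin N(u)$ because $G$ is simple. Distributing the set difference with $\Can\cup X$ over this disjoint union, the left-hand side becomes
\begin{equation*}
\bigl(N'(u)\setminus(\Can\cup X)\bigr)\;\cup\;\bigl(\{v\in N(u)\mid v<u\}\setminus(\Can\cup X)\bigr).
\end{equation*}
The first summand already matches the first summand of the right-hand side, so the whole task reduces to showing
\begin{equation*}
\{v\in N(u)\mid v<u\}\setminus(\Can\cup X)\;=\;\Gamma(u,X).
\end{equation*}

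For this equality, I would invoke the hypothesis that $u$ is the \emph{smallest} vertex in $\Can$: if $v<u$ then $v\notin\Can$, so intersecting with the complement of $\Can$ is vacuous on vertices smaller than $u$. Hence
\begin{equation*}
\{v\in N(u)\mid v<u\}\setminus(\Can\cup X)\;=\;\{v\in N(u)\mid v<u\}\setminus X\;=\;\{v\in N(u)\mid v<u,\; v\notin X\},
\end{equation*}
which is precisely the definition of $\Gamma(u,X)$. Combining the two summands recovers the stated formula.

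There is no real obstacle here; the only point that needs explicit mention is the minimality of $u$ in $\Can$, which is exactly what kills the $\Can$ part of the exclusion on the ``small'' side of the split. I would also briefly note that $\Can$ and $X$ are disjoint (as recorded after Lemma~\ref{lem:neighbor}), so no double counting arises on the right-hand side and the union there is effectively disjoint as well.
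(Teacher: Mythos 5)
Your proposal is correct and follows essentially the same route as the paper: split $N(u)$ by the ordering into the parts above and below $u$, observe the ``above'' part is untouched, and use the minimality of $u$ in $\Can$ to show the ``below'' part minus $(\Can\cup X)$ collapses to $\Gamma(u,X)$. Your explicit remarks that $u\notin N(u)$ (simplicity) and that minimality is what removes $\Can$ from the small side are exactly the points the paper's proof relies on.
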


\begin{proof}
    Let $Z$ be the set of vertices larger than $u$. 
    Since $u$ is the smallest vertex in $\Can$, 
    $(N(u)\setminus(\Can\cup X))\cap Z = (N'(u)\setminus (\Can\cup X))$. 
    From the definition of $\Gamma(u, X)$ 
    and $u$ is the smallest in 
    $\Can$, 
    $(N(u)\setminus(\Can\cup X))\cap(V\setminus Z) = N''(u)\setminus (\Can \cup X) = (N''(u) \setminus \Can) \cap (N''(u) \setminus X) = \Gamma(u, X)$, 
    where $N''(u) = \inset{v\in N(u)}{v < u}$. 
    This concludes the lemma.
    \qed
\end{proof}

In what follows, 
we use an 
adjacency lists for the sets $\Can$, $X$, and $\Gamma$, 
so that a removal and the recover of the removed element 
can be done in $O(1)$ time, 
and the merge of two sets can be done in linear time of their sizes.

\begin{lemma}
    \label{lem:mod:update:X:neighbor}
    When we add a vertex $u$ to $X$, 
    the update of  $\Gamma(v, X)$ for all vertices $v$ is done in $O(k)$ time.
\end{lemma}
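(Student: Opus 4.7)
The plan is to observe that adding $u$ to $X$ can only change those sets $\Gamma(v,X)$ which previously contained $u$, so the first step is to characterize exactly which vertices $v$ satisfy $u \in \Gamma(v,X)$. By the definition $\Gamma(v,X) = \{w \in N(v) : w \notin X, w < v\}$, the membership $u \in \Gamma(v,X)$ forces $u \in N(v)$ and $u < v$. Hence the only sets that need to be touched are $\Gamma(v,X)$ for $v$ ranging over $\{v \in N(u) : v > u\}$; every other $\Gamma(v,X)$ is unaffected by the transition $X \mapsto X \cup \{u\}$, because neither of its three defining conditions (neighbourhood of $v$, being smaller than $v$, membership in $X$) changes for any element $w \neq u$.

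Next I would invoke the $k$-degeneracy ordering $(\star)$: by the way the vertices are indexed, $|\{v \in N(u) : v > u\}| \le k$. Thus there are at most $k$ sets $\Gamma(v,X)$ that require any modification, and for each such $v$ the only update is to delete the single element $u$ from $\Gamma(v,X)$.

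Finally, using the adjacency-list representation declared just before the lemma statement (supporting $O(1)$ removal, with a pointer from each occurrence of $u$ in a $\Gamma$-list to its list position, stored once at preprocessing), each of these at most $k$ deletions costs $O(1)$ time. Summing over the at most $k$ affected $v$'s yields the claimed $O(k)$ bound.

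The main obstacle, such as it is, is conceptual rather than computational: one has to convince oneself that no $\Gamma(v,X)$ with $v \le u$ or $v \notin N(u)$ is altered, so that the work really is confined to the $k$ larger neighbours of $u$ in the degeneracy order. Once this is checked, the $O(k)$ bound is immediate from $(\star)$ and the $O(1)$-deletion property of the data structure; no amortized analysis is needed here.
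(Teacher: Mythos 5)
Your proposal is correct and follows the same route as the paper's (much terser) proof: only the sets $\Gamma(v,X)$ with $v\in N(u)$ and $v>u$ can contain $u$, there are at most $k$ such $v$ by the ordering $(\star)$, and each deletion is $O(1)$ in the list representation. You have simply spelled out the details the paper leaves implicit; no difference in approach.
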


\begin{proof}
    To update, 
    it is suffice to remove $u$ from $\Gamma(v, X)$ from all $v>u$. 
    Thus, it takes $O(k)$ time.
    \qed
\end{proof}

\begin{lemma}
    \label{lem:mod:update:Can:X}
    Let $S$ be an induced subtree of $G$, 
    $u$ be the smallest in the candidate set $\Can$ of $S$, 
    and $X$ be the forbidden set of $S$.  
    When we add $u$ to $S$ and remove $u$ from $\Can$, 
    the computation time of updating $\Can$ and $X$ are
    $O(k + |\Gamma(u, X)|)$ and $O(k)$ time, respectively.  
\end{lemma}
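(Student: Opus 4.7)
The plan is to bound the $\Can$- and $X$-updates separately, applying Lemmas~\ref{lem:can:update} and~\ref{lem:X:update} together with the fact that $u$ is the smallest element of $\Can$ in the $(\star)$-ordering. The single structural observation that will do most of the work is that, after the removal of $u$, every remaining vertex of $\Can$ lies strictly above $u$ in the ordering; consequently $\Can \cap N(u) \subseteq N'(u)$, which by $(\star)$ has size at most $k$.

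For $\Can$, I will use Lemma~\ref{lem:can:update} to express the new candidate set as $(\Can \setminus N(u)) \cup (N(u) \setminus (\Can \cup X))$. The deletion step removes the at most $k$ vertices of $\Can \cap N(u)$ from the adjacency-list representation at $O(1)$ each, costing $O(k)$ in total. For the insertion step, Lemma~\ref{lem:mod:update:Can} rewrites $N(u) \setminus (\Can \cup X)$ as $(N'(u) \setminus (\Can \cup X)) \cup \Gamma(u, X)$. Since $|N'(u)| \le k$, scanning $N'(u)$ and testing each vertex for membership in $\Can$ and $X$ in constant time yields the first piece in $O(k)$ time; the second piece is the precomputed list $\Gamma(u, X)$, which can be appended in $O(|\Gamma(u, X)|)$ time. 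Summed, the $\Can$ update costs $O(k + |\Gamma(u, X)|)$.

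For $X$, Lemma~\ref{lem:X:update} says the new forbidden set equals $X \cup \{u\} \cup (\Can \cap N(u))$. Reusing $|\Can \cap N(u)| \le k$, only at most $k + 1$ new elements are inserted, each in $O(1)$ time under the adjacency-list representation, giving an $O(k)$ bound overall.

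The only subtle point, and really the only thing one might mistake, is that the $O(k)$ for $X$ covers only the insertions into $X$ itself; the repair of the $\Gamma$-lists for the at most $k + 1$ vertices newly added to $X$ is charged separately via Lemma~\ref{lem:mod:update:X:neighbor} and is not part of this statement. Beyond spotting this accounting convention and the degeneracy bound on $|\Can \cap N(u)|$, the argument is a direct combination of the preceding lemmas with the list-based data structure, so no serious obstacle is anticipated.
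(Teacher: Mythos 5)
Your proof is correct and follows essentially the same route as the paper's: both rest on the observation that $u$ being the smallest element of $\Can$ forces $\Can \cap N(u) \subseteq N'(u)$, hence of size at most $k$ by $(\star)$, and both invoke Lemma~\ref{lem:mod:update:Can} to split the insertion into the size-$k$ part $N'(u)$ and the precomputed $\Gamma(u,X)$. Your closing remark that the repair of the $\Gamma$-lists is accounted for separately via Lemma~\ref{lem:mod:update:X:neighbor} matches how the paper defers that cost to the proof of Theorem~\ref{thm:complexity}.
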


\begin{proof}
    Since $u$ is the smallest vertex in $\Can$, 
    $|\Delta| \le k$, 
    where $\Delta = |\Can \cap N(u)|$. 
    Since vertices in $N(u)$ are sorted by the ordering, 
    the computation time of $\Delta$ is $O(k)$. 
    Thus, 
    adding vertices in $\Delta$ and $u$ to $X$ 
    and removing $\Delta$ from $\Can$ are done in $O(k)$ time. 
    From Lemma~\ref{lem:mod:update:Can}, 
    since $|\inset{v \in N(u)}{u < v}| \le k$, 
    the computation time of adding these vertex to $\Can$ is $O(k + |\Gamma(u, X)|)$. 
    Hence, the lemma holds. 
    \qed 
\end{proof}

\begin{figure}[t]
    \begin{center}
        \includegraphics[width=35em]{./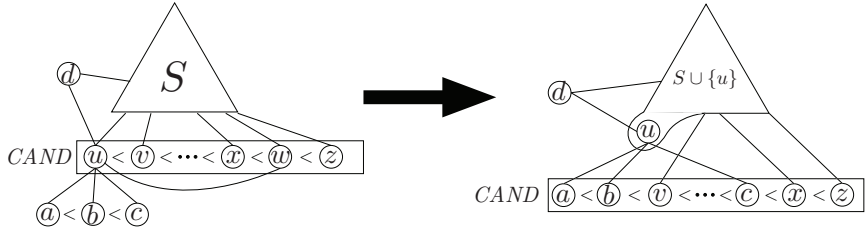}
        \caption{
            This figure shows the changes between candidate set $\Can$ 
            by the addition of $u$ to $S$. 
            $S$ is an induced subtree and 
            $\set{u, v, \dots, x, w, z}$ is the candidate set of $S$. 
            Let assume that $a < b < u < c$ and $d < u$. 
            Since $d$ does not belongs to $\Gamma(u, X)$, 
            $d$ is skipped checking. 
        }
        \label{img:candidate}
    \end{center}
\end{figure}

In Fig.~\ref{img:candidate}, 
we show the changes of between 
the candidate set of $S$ and that of $S\cup\set{u}$ 
after adding $u$ to $S$. 
We implement $\Can$ and $X$  by doubly linked lists. 
Thanks to the doubly linked list, 
the cost for a deletion and a recover of a vertex 
can be done in constant time.

\begin{theorem}
    \label{thm:complexity}
    Let $G = (V, E)$ be the input graph and $k$ is the degeneracy of $G$. 
    Our algorithm enumerates all induced subtrees in $G$ 
    in $O(k)$ time per solution 
    after $O(|V|+|E|)$ preprocessing time without duplicates 
    using $O(|V|+|E|)$ space. 
\end{theorem}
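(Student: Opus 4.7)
The plan is to decompose the proof into preprocessing, per-node work, and an amortized total-work analysis. For preprocessing, I would run the Matula--Beck algorithm to obtain the degeneracy ordering in $O(|V|+|E|)$ time, then sort each adjacency list by this ordering and initialize the doubly linked lists representing $\Can$, $X$, and each $\Gamma(v,X)$; all of this fits in $O(|V|+|E|)$ time and space. For the space bound throughout the run, the observation is that the recursion modifies the doubly linked lists incrementally and can restore them in constant time per update upon return, so only $O(|V|+|E|)$ total space is needed.

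Next I would analyze the shape of the recursion tree produced by Algorithm~\ref{alg:ISE}. Each internal node spawns exactly two recursive children, and by Theorem~\ref{thm:correctness} the leaves are in one-to-one correspondence with the $N$ induced subtrees, so the tree has at most $2N-1$ nodes. At an internal node that picks $u$, the work is the sum of the left-call update (bounded by $O(k)$ via Lemma~\ref{lem:mod:update:X:neighbor}) and the right-call update (bounded by $O(k + |\Gamma(u,X)|)$ via Lemma~\ref{lem:mod:update:Can:X}), giving $O(k + |\Gamma(u,X)|)$ per node. The total cost is therefore $O(kN) + \sum |\Gamma(u,X)|$, where the sum ranges over internal nodes, so the theorem reduces to showing $\sum |\Gamma(u,X)| = O(kN)$.

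The amortization is the main obstacle. The structural observation I would exploit is that for each $v \in \Gamma(u,X)$, the set $S \cup \{u,v\}$ is itself an induced subtree, because $v$'s adjacency to $S \cup \{u\}$ is exactly one (through the newly added $u$); together with $S \cup \{u\}$ itself this exposes at least $|\Gamma(u,X)| + 1$ distinct solutions inside the right subtree of that node. I would then set up a charging scheme: for each pair $(R, v)$ with $v \in \Gamma(u,X)$ at a right call $R$, charge one unit of cost to a uniquely chosen descendant internal node at which $v$ is first picked from $\Can$ along a canonical branch (for instance, the leftmost branch in which $v$ survives in $\Can$). The critical lemma to establish is that each internal node $J$ receives charges from only $O(1)$ ancestors, which reduces to showing that the ancestor right call $R^{\ast}$ that introduced $u_J$ into $\Can$ is unique, and that along the subsequent path $u_J$ is never moved into $X$ before being picked at $J$.

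The technical risk I foresee is that a vertex added to $\Can$ at some ancestor may, in a sibling branch of the recursion, be moved to $X$ without ever being picked, making the notion of \emph{first time $v$ is picked} branch-dependent. I would resolve this by fixing, for each right call, a canonical subbranch in which each $v \in \Gamma(u,X)$ is eventually picked, and charge only against those canonical occurrences; existence and uniqueness follow by induction on the recursion tree together with the fact that $\Can$ can only lose $v$ via a right call on a neighbor of $v$. Putting everything together yields $\sum |\Gamma(u,X)| = O(N)$, hence total time $O(kN)$ on top of the $O(|V|+|E|)$ preprocessing, i.e.\ amortized $O(k)$ per induced subtree.
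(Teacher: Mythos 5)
Your accounting of the per-node work omits the dominant cost that the paper's amortized analysis is actually built to handle: the maintenance of the sets $\Gamma(\cdot,X)$ during the \emph{right} call. Lemma~\ref{lem:mod:update:Can:X} bounds only the time to update the lists $\Can$ and $X$; it says nothing about updating $\Gamma$. In the right call, not just $u$ but every vertex of $\Can\cap N(u)$ is moved into $X$ (Lemma~\ref{lem:X:update}), and by Lemma~\ref{lem:mod:update:X:neighbor} each such insertion costs $O(k)$ to propagate to the sets $\Gamma(w,X)$. Since $|\Can\cap N(u)|$ can be as large as $k$, the right call carries an extra $O(k\,|\Can\cap N(u)|)=O(k^2)$ term that your bound $O(k+|\Gamma(u,X)|)$ does not cover. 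The paper amortizes this term by charging the constant cost of removing $v\in\Can\cap N(u)$ from $\Gamma(w,X)$ to the induced subtree $S\cup\{v,w\}$, and then arguing that each solution receives $O(k)$ such charges because the number of chargeable pairs within one candidate set is at most $k|\Can|$. Without this piece your total is not $O(kN)$; you would only get $O(k^2 N)$ by taking the worst case per node.

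For the part you do address, $\sum|\Gamma(u,X)|$, your charging scheme is both more complicated than necessary and not actually pinned down: as you yourself note, ``the first node where $v$ is picked'' is branch-dependent, and your fix via a ``canonical subbranch'' is left as an unproved existence-and-uniqueness claim. Moreover, charging $v\in\Gamma(u,X)$ to the \emph{solution} $S\cup\{u,v\}$ is not injective -- the same solution $T$ can be reached as $S'\cup\{u',v'\}$ for several distinct iterations $S'$ that are not on $T$'s generation path, so an $O(1)$ bound on charges per solution does not follow. The paper's route avoids all of this: $\Gamma(u,X)\subseteq\Can'$, where $\Can'$ is the candidate set of $S\cup\{u\}$, and $\sum|\Can'|$ over all iterations is at most $N$ because each $w\in\Can'$ corresponds to the solution $(S\cup\{u\})\cup\{w\}$ and each solution is generated from a unique parent in the binary-partition tree, making that correspondence injective. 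Your preprocessing, space, and $2N$-node recursion-tree arguments match the paper and are fine.
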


\begin{proof}
    Since the update of $\Can$ and $X$ is correct, 
    the correctness of the algorithm is obvious. 
    (I) We discuss the time complexity of the preprocessing. 
    First, our algorithm computes an ordering of vertices 
    by Matula and Beck's algorithm~\cite{Matula:Beck:1983} 
    in $O(|V| + |E|)$ time. 
    Next, our algorithm sorts 
    vertices belonging to each adjacency list by using a bucket sort.
    Thus, the preprocessing time is $O(|V|+|E|)$. 

    (II) 
    We consider an iteration inputting $S$, $X$, and $\Can$, and assume
    that $\Can'$ is the candidate set for $S\cup\set{u}$. 
    Line~\ref{alg:ISE:output} and line~\ref{alg:ISE:choose} run in $O(1)$ time. 
    From Lemma~\ref{lem:mod:update:X:neighbor}, 
    line~\ref{alg:ISE:call1} needs $O(k)$ time. 
    From Lemma~\ref{lem:mod:update:Can:X}, 
    since it is clear that $|\Gamma(u, X)| < |\Can'|$,   
    our algorithm needs $O(k + |\Can'|)$ time for computing $\Can'$ and $X$. 
    The update of $\Gamma$'s is done in $O(k|\Can\cap N(u)|)$ time, from Lemma~\ref{lem:mod:update:X:neighbor}. 
    We observe that for each vertex $w$ 
    such that $v\in (N(u)\cap \Can)$ is removed from $\Gamma(w, X)$, 
    $w$ is in $\Can$ of $S\cup\set{v}$, 
    that will be generated by a descendant of this iteration.
    We charge the cost of constant time to remove $v$ from $\Gamma(w, X)$ 
    to the induced subtree $S\cup\set{v, w}$. 
    Then, we can see that $S\cup\set{v,w}$ is charged only from iterations inputting $S$, 
    that divides the problem by $u'$ such that $(u',v)\in E$, 
    that is, the iteration generates $S\cup \set{u'}$.
    We consider the average amount of the charge over all induced subtrees of $S\cup\set{v,w}$, 
    $v\in \Can$, and $w$ is in $\Can$ of $S\cup\set{v}$. 
    Since the number of pairs $\set{u,v}\subseteq \Can$ is at most $k|\Can|$, 
    we can see the average charge is $O(k)$ for each $S\cup\set{v,w}$. 
    Thus, in summary, we can see the update time for $\Gamma$ in an iteration is bounded by $O(k)$, on average.
    Thus, an iteration takes $O(k + k |\Can'|)$ time on average.
    We observe that the sum of $|\Can'|$ over all iterations is no greater 
    than the sum of $|\Can|$ over all induced subtrees, 
    since $\Can'$ is the candidate set of $S\cup\set{u}$ and forbidden set $X\cup\set{u}$, 
    and $S\cup\set{u}$ is generated only from $S$. 
    Further, we can see that $S\cup\set{u}$ is generated only from $S$ 
    this iteration.
    Hence, 
    thus the sum of $|\Can|$ over all induced subtrees is bounded 
    by the number of induced subtrees. 
    Therefore, 
    the computation time for each iteration is bounded by $O(k)$ on average. 

    In a binary partition algorithm, 
    each iteration at the leaf of the recursion 
    outputs a solution, 
    and each non-leaf iteration generates exactly two recursive calls. 
    Thus, the number of iterations (recursive calls) of a binary partition algorithm 
    is at most $2N$. 
    Hence, the computation time per induced subtree is $O(k)$. 
    All all sets the algorithm maintains are of size $O(|V|+|E|)$ in total. 

    We need a bit care to perform a recursive call. 
    When a recursive call is made, 
    we record the operations 
    to prepare the parameters given to the recursive call on the memory. 
    When the recursive call ends, 
    we apply the inverse operations of the recorded operations 
    to recover the variables such as $\Can$ and $X$. 
    In this way, 
    we can recover the variables from the updated ones 
    without increasing the time complexity. 
    Since no vertex is added or deleted from the same variable twice, 
    the accumulated space for the recorded operations is bounded by $O(|V|+|E|)$. 
    From the above arguments, 
    our algorithm runs in $O(k)$ time per solution 
    after $O(|V|+|E|)$ preprocessing time using $O(|V|+|E|)$ space. 
    \qed
\end{proof}


\section{Conclusion}
\label{sec:conc}

In this paper, 
we have presented an algorithm for enumerating all induced subtrees in $k$-degenerate graph. 
Our algorithm runs in $O(k)$ time per solution after linear preprocessing time 
using linear space. 
From this result, 
we obtain the following corollary; 
if the input graph has a constant degeneracy, 
our algorithm is optimal with respect to the computation time per solution. 
$K$-degenerate graphs often appear in real-world data 
even when with much noise. 
Thus considering the applications, 
it is important to study on efficient computation on $k$-degeneracy. 
This result is one of the first steps for such studies, 
and researches on enumeration algorithms on $k$-degenerate graphs will be an important issue.

\subsection*{Acknowledgement}
This work was partially supported by 
MEXT Grant-in-Aid for Scientific Research (A) $24240021$ 
and Grant-in-Aid for JSPS Fellows $25\cdot1149$. 

{\small

}

\end{document}